\documentclass[12pt]{amsart}
\usepackage{hyperref,breakurl}
\usepackage{versions}\excludeversion{ignore}
\usepackage{amssymb,amsthm,latexsym,amsmath,mathdots}
\usepackage{epstopdf}
\usepackage[dvipsnames,svgnames]{pstricks}
\newcommand{\cA}{DodgerBlue}
\newcommand{\cB}{Black}
\newcommand{\cC}{Crimson}
\newcommand{\A}{{\color{\cA}A}}
\newcommand{\B}{{\color{\cB}B}}
\newcommand{\C}{{\color{\cC}C}}
\newtheorem{theorem}{Theorem}
\newtheorem{corollary}{Corollary}
\let\rel=\mathrel
\newcommand{\stack}[1]{{\def\\{|}{\{#1\}}}}
\newcommand{\rstack}[1]{\rel{\stack{#1}}}
\sloppy
\title{Tripartite Unions}
\author{Nachum Dershowitz}
\date{\today}                                           

\begin{document}
\maketitle
\begin{abstract}
This note provides conditions under which the union of three well-founded binary relations is also well-founded.
\end{abstract}

This note concerns conditions under which the union of several well-founded (binary) relations is also well-founded.%
\footnote{By \emph{well-founded}, we mean the absence of infinite forward-pointing paths.
For some of the history of well-foundedness based on Ramsey's Theorem, 
see Pierre Lescanne's \textit{Rewriting List}, contributions 38--41 at \url{http://www.ens-lyon.fr/LIP/REWRITING/CONTRIBUTIONS}
and
Andreas Blass and Yuri Gurevich,  ``Program Termination and Well Partial Orderings'', \textit{ACM Transactions on Computational Logic} \textbf{9}(3), 2008 (available at \url{http://research.microsoft.com/en-us/um/people/gurevich/Opera/178.pdf}).}

To garner insight, we tackle
just three relations, $A$, $B$, and $C$, over some underlying set $V$.
Let $$\stack{A\\ B}$$ denote $A\cup B$, and so on for other unions of relations.
And let juxtaposition indicate composition of relations
and superscript $*$ signify transitive closure.
We'll refer to the relations as ``colors''.

\begin{theorem}[Ramsey]
The union $\stack{A\\B\\C}$ is well-founded if
\begin{eqnarray}
\stack{A\\B\\C} \stack{A\\B\\C} & \subseteq & \stack{A\\B\\C}
\end{eqnarray}
\end{theorem}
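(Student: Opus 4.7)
The plan is to invoke the infinite version of Ramsey's theorem for pairs with three colors. Assume each of $A$, $B$, and $C$ is well-founded --- the standing hypothesis of the paper --- and suppose, for contradiction, that $x_0, x_1, x_2, \dots$ is an infinite forward-pointing chain under $\stack{A\\B\\C}$.

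First I would use the composition-closure hypothesis to upgrade this chain to a complete directed graph on the indices: by a straightforward induction on $j-i$, closure under composition yields $x_i \rstack{A\\B\\C} x_j$ for every $i < j$, not just for consecutive indices. Hence for each unordered pair $\{i,j\}$ with $i<j$, at least one of the three relations $A$, $B$, $C$ witnesses $x_i \to x_j$; I would pick one such witness, breaking ties arbitrarily, and call it the \emph{color} of the pair.

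Next I would apply the infinite Ramsey theorem to this $3$-coloring of the unordered pairs of naturals. It furnishes an infinite set of indices $i_0 < i_1 < i_2 < \cdots$ all of whose pairs receive the same color --- say $A$. The subsequence $x_{i_0}, x_{i_1}, x_{i_2}, \dots$ is then an infinite descending chain lying entirely inside $A$, contradicting the well-foundedness of $A$. The analogous argument with $B$ or $C$ in place of $A$ handles the remaining two cases.

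There is no genuinely hard step here; the substance of the theorem resides in Ramsey's theorem itself, which is precisely why the result bears his name. The only point requiring any care is the opening reduction: the $3$-coloring of pairs only makes sense because every pair $(x_i,x_j)$ with $i<j$ is in fact related by $\stack{A\\B\\C}$, and this is exactly what the composition hypothesis --- the transitivity of the union --- guarantees.
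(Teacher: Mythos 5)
Your proof is correct and follows exactly the route the paper takes: use transitivity of the union to turn the infinite chain into a complete $3$-colored graph on the indices, apply the infinite Ramsey theorem for pairs, and extract an infinite monochromatic subchain contradicting the well-foundedness of a single color. You have merely spelled out the details (the induction on $j-i$, the tie-breaking in the coloring) that the paper leaves implicit.
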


\begin{proof}
The infinite version of Ramsey's Theorem applies when the union is transitive, so that every two (distinct) nodes
within an infinite chain in the union of the colors has a colored
(directed) edge.
Then,  there must lie an infinite monochrome subchain within any infinite chain, 
contradicting the well-foundedness of each color alone.%
\footnote{See
Alfons Geser, \textit{Relative Termination}, Ph.D.\@ dissertation, Fakult\"at fu\"r Mathematik und Informatik, Universit\"at Passau, Germany, 1990 (Report 91-03, Ulmer Informatik-Berichte, Universit\"at Ulm, 1991; available at \url{http://homepage.cs.uiowa.edu/~astump/papers/geser_ dissertation.pdf}).}
\end{proof}

Only three of the nine cases are actually needed for the limited outcome that we are seeking
(an infinite monochromic path, rather than a clique---as in Ramsey's Theorem), as we observe next.

\begin{theorem}
The union $\stack{A\\B\\C}$ is well-founded if
\begin{eqnarray}
B A \cup CA \cup CB & \subseteq & \stack{A\\B\\C} \;.
\end{eqnarray}
\end{theorem}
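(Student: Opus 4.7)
The plan is to reduce to Theorem~1 by decomposing $(\stack{A\\B\\C})^+$ into three well-founded pieces whose union is transitive. First, I read the hypothesis as rewrite rules: any two-step path of ``descending'' colors $BA$, $CA$, or $CB$ (with respect to the linear order $A\prec B\prec C$) equals a single $\stack{A\\B\\C}$-edge. Iterating these contractions on any finite $\stack{A\\B\\C}$-path, one may rewrite it to an equivalent path whose color word is non-decreasing, of the form $A^{a}B^{b}C^{c}$. Thus every pair in $(\stack{A\\B\\C})^+$ is realized by such a non-decreasing path, and in particular $(\stack{A\\B\\C})^+\subseteq A^{*}B^{*}C^{*}$.

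This suggests the partition
\[
R_A \;=\; A^+, \qquad R_B \;=\; A^{*}B^+, \qquad R_C \;=\; A^{*}B^{*}C^+
\]
of $(\stack{A\\B\\C})^+$, whose union coincides with (and hence inherits transitivity from) $(\stack{A\\B\\C})^+$. If each $R_X$ can be shown well-founded, then Theorem~1 applied to the triple $(R_A,R_B,R_C)$ yields $(\stack{A\\B\\C})^+$ well-founded, and hence $\stack{A\\B\\C}$ well-founded. The component $R_A = A^+$ is well-founded because $A$ is. For $R_B$, one must show $\stack{A\\B}$ well-founded; here the hypothesis $BA\subseteq\stack{A\\B\\C}$ combined with well-foundedness of $C$ is enough, since any infinite $\stack{A\\B}$-chain with infinitely many $B$-edges would, via the $BA$-reductions, produce infinitely many derived $\stack{A\\B\\C}$-edges that a pigeon-hole/Ramsey argument on their colors assembles into an infinite monochromatic chain in one of $A$, $B$, or $C$. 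The $R_C$ case is handled analogously, invoking also $CA,CB\subseteq\stack{A\\B\\C}$ to absorb any $A$- or $B$-prefix of a would-be infinite $C$-chain.

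The main obstacle is this well-foundedness step, particularly for $R_B$: one must exploit the third-color part of the hypothesis carefully, since the ``pure'' two-color inclusion $BA\subseteq\stack{A\\B}$ is not available and a naive induction on the number of colors would be circular. Once the well-foundedness of $R_B$ and $R_C$ is in hand, the remaining analysis reduces cleanly to Theorem~1.
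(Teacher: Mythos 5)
Your normal-form observation---that the contractions $BA,CA,CB\to{}$single edge strictly shorten paths and that irreducible color words are non-decreasing, so $(A\cup B\cup C)^+=A^+\cup A^*B^+\cup A^*B^*C^+$---is correct, but the reduction to Theorem~1 relocates the difficulty rather than resolving it, and the argument you offer for the relocated difficulty does not work. Two concrete problems. First, the step ``infinitely many derived edges \dots\ a pigeon-hole/Ramsey argument on their colors assembles into an infinite monochromatic chain'' is a non sequitur: pigeon-hole yields infinitely many edges of one color \emph{scattered} along a path, not consecutive ones, and Ramsey's Theorem only assembles scattered edges into a chain when the union is transitive---which is precisely the much stronger hypothesis of Theorem~1, not of this theorem. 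A path colored $ABAB\cdots$ has infinitely many edges of each color and no monochromatic subchain unless one can compose, and whether one can compose is exactly what is in question. Second, the component $R_C=A^*B^*C^+$ is not a smaller problem: an infinite $R_C$-chain unfolds into an infinite $A\cup B\cup C$-chain, so ``show $R_C$ is well-founded'' is essentially the theorem you set out to prove, and the appeal to Theorem~1 is close to circular.

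The missing idea---the one the paper's proof turns on---is a device that forces the absorption process to stabilize: the infinite chain must be set up so that an $A$-step is taken whenever one is available (via maximality of the extracted $A$-subsequences, or by always preferring $A$-steps that preserve immortality). Then when $x\rel{B\cup C}y\rel{A}z$ is contracted using the hypothesis, the resulting edge from $x$ to $z$ cannot be an $A$-edge, since otherwise an $A$-step out of $x$ would have been taken; hence each contraction genuinely eliminates an $A$-step, leaving an infinite $B\cup C$-chain along which no $A$-step is possible. Repeating the argument for $B$ against $C$ (with $A$- and $B$-steps now both precluded at each point) produces a pure $C$-chain and the contradiction. Without such a preference/maximality argument, the contractions can keep regenerating $A$-edges and never yield the monochrome tail your plan requires.
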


\begin{proof}
When the union is not well-founded, there is an infinite path $X=\{x_i\}_i$ with each edge from $x_i$ to $X_{i+1}$ one of $A$, $B$, or $C$.
Extract a maximal subsequence  $\{x_{i_j}\}_j$ of $X$ such that $x_{i_j} \rel{A} x_{i_{j+1}}$ for each $j$.
If it's finite, then repeat at the first opportunity in the tail.
If any is infinite, we have our contradiction.
If they're all finite, then consider the first occurrence of $x \rstack{B\\C} y \rel{A} z$.
Since we could not take an $A$-step from $x$, or we would have, the  conditions tell us that $x \rstack{B\\C} z$.
Swallowing up all such (non-initial) $A$-steps in this way, we are left with an infinite chain in $B\cup C$,
for which we also know that no $A$-steps are possible anywhere.
Now extract maximal $B$-chains and then erase them, replacing $x \rel{C} y \rel{B} z$
with $x \rel{C} z$ ($A$- and $B$-steps having been precluded), leaving an infinite chain colored purely $C$.
\end{proof}

\begin{corollary}
If $A$, $B$, and $C$ are transitive and
\begin{eqnarray*}
B A \cup CA \cup CB & \subseteq & \stack{A\\B\\C} \;,
\end{eqnarray*}
then, whenever there is an infinite path in the union $\stack{A\\B\\C}$, there is an infinite monochromatic clique.
\end{corollary}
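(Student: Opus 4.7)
The plan is to recycle the constructive core of Theorem~2's proof and then upgrade the monochromatic path that it yields to a monochromatic clique by invoking transitivity.

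First I would observe that the construction in the proof of Theorem~2 never really uses well-foundedness of the individual colors; well-foundedness is invoked only at the very end, to convert an infinite monochromatic chain into a contradiction. Starting instead from the hypothesis that some infinite path exists in $\stack{A\\B\\C}$, that same construction delivers, in one of its three cases, an infinite sequence of vertices $x_{0}, x_{1}, x_{2}, \ldots$ together with a single color $X \in \{A, B, C\}$ such that $x_{i} \rel{X} x_{i+1}$ for every $i$: directly, by extracting a maximal $A$-subsequence; or, after absorbing intervening $A$-edges via $BA, CA \subseteq \stack{A\\B\\C}$, by extracting a maximal $B$-subsequence; or, after additionally absorbing $B$-edges via $CB \subseteq \stack{A\\B\\C}$, by reading off a purely $C$-colored chain.

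Next I would invoke transitivity of $X$: since $x_{i} \rel{X} x_{i+1}$ holds for every $i$ and $X$ is transitive, we obtain $x_{i} \rel{X} x_{j}$ for every $i < j$. The infinite vertex set $\{x_{0}, x_{1}, x_{2}, \ldots\}$ then forms an $X$-monochromatic clique in the natural directed sense: every pair of distinct vertices in the set is linked by an edge of color $X$, in the direction induced by the indexing.

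I do not anticipate any genuine obstacle. The single point requiring care is confirming that the machinery of Theorem~2's proof really does run without well-foundedness of the individual colors; and since the only role of that hypothesis there was to forbid precisely the monochromatic chains we are now hoping to find, the rerun costs nothing and the corollary follows at once.
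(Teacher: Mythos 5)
Your proposal is correct and is exactly the argument the paper intends: the corollary is stated without proof precisely because the construction in Theorem~2 already produces an infinite monochromatic chain (well-foundedness of the individual colors being used there only to turn that chain into a contradiction), and transitivity then closes the chain into a clique. The only point worth a passing remark is that ``infinite clique'' should be read on the index set of the chain, since vertices may repeat along an infinite path; the paper is equally informal on this.
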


We can do considerably better than the previous theorem:

\begin{theorem}[Tripartite]
The union $\stack{A\\B\\C}$ is well-founded if
\begin{eqnarray}
\stack{B\\C} A  & \subseteq & A\stack{A\\B\\C}^* \cup B \cup C\\\nonumber
CB  & \subseteq & A\stack{A\\B\\C}^* \cup BB^* \cup C \;.
\end{eqnarray}
\end{theorem}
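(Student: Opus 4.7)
The plan is to follow the two-phase strategy from the proof of Theorem 2, reducing an infinite path in $A\cup B\cup C$ (assumed for contradiction) to an infinite monochromatic chain. The wrinkle is the extra $A(A\cup B\cup C)^*$ disjunct appearing on each right-hand side, which lets a rewrite introduce an $A$-step rather than simply collapse one.

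\emph{Phase 1 (strip off $A$).} I would extract maximal $A$-subchains from the infinite path as in Theorem 2; any infinite one contradicts well-foundedness of $A$. Otherwise, at the first $\rstack{B\\C}\rel{A}$ pattern $x\rstack{B\\C}y\rel{A}z$, the first hypothesis gives $(x,z)\in A(A\cup B\cup C)^*\cup B\cup C$. The $A(A\cup B\cup C)^*$ alternative would supply an $A$-successor of $x$, which the greedy extraction should already have absorbed into an $A$-chain through $x$---so it is unavailable, exactly as ``we could not take an $A$-step from $x$'' in Theorem 2. Thus $(x,z)\in B\cup C$; collapse $x\to y\to z$ into a single edge, and iterate to an infinite $(B\cup C)$-chain.

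\emph{Phase 2 (strip off $B$).} From the $(B\cup C)$-chain, extract maximal $B$-subchains; any infinite one contradicts well-foundedness of $B$. At the first $CB$ pattern $x\rel{C}y\rel{B}z$, the second hypothesis gives $(x,z)\in A(A\cup B\cup C)^*\cup BB^*\cup C$. The $BB^*$ alternative requires a $B$-successor of $x$ and is killed by the $B$-extraction; the $A(A\cup B\cup C)^*$ alternative requires an $A$-successor of $x$ with a $(A\cup B\cup C)^*$-continuation landing back on the surviving chain at $z$, which should have been swept up by Phase 1's $A$-extraction. So $(x,z)\in C$; collapse, and iterate to an infinite $C$-chain, contradicting well-foundedness of $C$. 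The main obstacle, I expect, is justifying rigorously that both $A(A\cup B\cup C)^*$ disjuncts---one in each phase---are unavailable by the same extremality reasoning Theorem 2 used for its bare $A$ case; this requires the Phase-1 extraction to absorb not just direct $A$-edges along the path but also $A$-successors whose $(A\cup B\cup C)^*$-continuations loop back into it.
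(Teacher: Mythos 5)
There is a genuine gap, and it is precisely the one you flag at the end without resolving: the extremality argument of Theorem 2 does not survive the passage from the condition $BA\cup CA\cup CB\subseteq A\cup B\cup C$ to the present one. In Theorem 2 every edge produced by the hypothesis joins two points already lying on the infinite path, so ``extract maximal subchains and collapse'' is a closed operation on that path. Here the disjunct $A(A\cup B\cup C)^*$ witnesses $(x,z)$ by an $A$-edge from $x$ to some element $w$ that need not lie on the path at all, followed by a walk back to $z$; no extraction of maximal $A$-subchains from a \emph{given} path can absorb such off-path successors, and likewise the intermediate points of a $BB^*$ witness in your Phase 2 are off the path and invisible to a $B$-extraction. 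So you cannot conclude that these disjuncts are ``unavailable,'' and starting from an arbitrary infinite path and post-processing it is the wrong starting point.

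The paper's fix is to change what the infinite chain is in the first place. Call an element \emph{immortal} if an infinite chain in $A\cup B\cup C$ issues from it, and build the chain greedily: from $x_i$, take an $A$-step to an immortal element whenever one exists, and only otherwise a $B$- or $C$-step to an immortal element. Then at any point $x$ where a $B$- or $C$-step was taken, \emph{every} $A$-successor of $x$ is mortal; but the first leg of an $A(A\cup B\cup C)^*$ witness from $x$ to the later (hence immortal) point $z$ would be an $A$-successor that reaches $z$ and is therefore immortal---contradiction. The exclusion of that disjunct is thus a property of the construction, not of an extraction. The same device recurs in Phase 2: before invoking the $CB$ condition, reroute the residual $B\cup C$ chain through $BB^*$ detours wherever one leads to a later point of the chain, so that any surviving $C$-step sits at a point from which no such detour exists; only then are both the $A(A\cup B\cup C)^*$ and $BB^*$ disjuncts precluded and the collapse to a pure $C$-chain goes through. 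Your proposal correctly locates the difficulty but does not supply this construction, and without it the argument does not close.
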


Let's call the existence of an infinite outgoing chain in the union $\stack{A\\B\\C}$ \emph{immortality}.

\begin{proof}
We first construct an infinite chain $X=\{x_i\}_i$, in which an $A$-step is always preferred over $B$ or $C$, as long as immortality is maintained.
To do this, we start with an immortal element $x_0\in V$.
At each stage in the construction, if the chain so far ends in $x_i$, we look to see if there is any $y$ such that $x_i \rel{A} y$ and from which proceeds some infinite chain in the union, in which case $y$ is chosen to be $x_{i+1}$.
Otherwise, $x_{i+1}$ is any immortal element $z$ such that $x_i \rel{B} z$ or $x_i \rel{C} z$.

If there are infinitely many $B$'s and/or $C$'s in $X$, use them---by means of the first condition---to remove all subsequent $A$-steps, leaving only $B$- and $C$-steps going out of points from which $A$ leads of necessity to mortality.
From what remains, if there is any $C$-step at a point where one could take one or more $B$-steps to anyplace later in the chain, take the latter route instead.
What remains now are $C$-steps at points where $BB^*$ detours are also precluded.
If there are infinitely many such $C$-steps,
then applying the  condition for $CB$ will result in a pure $C$-chain, because neither $A\stack{A\\B\\C}^* $ nor $ BB^*$ are options.
\end{proof}

Dropping $C$ from the conditions of the previous theorem, one gets
the \emph{jumping} criterion for well-foundedness of the union of two well-founded relations $A$ and $B$:%
\footnote{See Henk Doornbos and Burghard von Karger, ``On the Union of Well-Founded Relations'', \textit{Logic Journal of the IGPL} \textbf{6}(2), pp.\@ 195--201, 1998 (available at \url{http://citeseerx.ist.psu.edu/viewdoc/download?doi=10.1.1.28.8953&rep=rep1&type=pdf}). The property is called ``jumping'' in
Nachum Dershowitz, ``Jumping and Escaping: Modular Termination and the Abstract Path Ordering'', \textit{Theoretical Computer Science} \textbf{464}, pp.\@ 35--47, 2012 (available at \url{http://nachum.org/papers/Toyama.pdf}).}
\begin{eqnarray*}
B A & \subseteq & A\stack{A\\B}^* \cup B \;.
\end{eqnarray*}
Applying this criterion twice,
one gets somewhat different (incomparable) conditions for well-foundedness.
 
\begin{theorem}[Jumping]
The union $\stack{A\\B\\C}$ is well-founded if
\begin{eqnarray}
B A & \subseteq & A\stack{A\\B}^* \cup B\\\nonumber
C \stack{A\\B} & \subseteq & \stack{A\\B}\stack{A\\B\\C}^* \cup C \;.
\end{eqnarray}
\end{theorem}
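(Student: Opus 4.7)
The plan is simply to apply the binary jumping criterion twice in succession, treating $A\cup B$ as a single ``composite color'' on the second pass. Condition (3) is verbatim the jumping hypothesis for the pair $(A,B)$, so the first application is immediate.

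In more detail, I would first invoke the jumping criterion on the pair $A$ and $B$: both are well-founded by hypothesis, and condition (3) gives $BA\subseteq A\stack{A\\B}^*\cup B$, which is exactly what the two-relation criterion requires. Hence $\stack{A\\B}$ is well-founded.

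Next I would rename $R:=\stack{A\\B}$ and view the problem as one about the two well-founded relations $R$ and $C$. The jumping criterion for this pair demands
\[
 C R \;\subseteq\; R\stack{R\\C}^*\cup C,
\]
and since $\stack{R\\C}=\stack{A\\B\\C}$, this is precisely condition (4). Another application of the criterion yields well-foundedness of $\stack{R\\C}=\stack{A\\B\\C}$, as desired.

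The only point that deserves any scrutiny is checking that the second invocation is legitimate, i.e., that condition (4) really matches the jumping template once $A\cup B$ is taken as the left-hand color. The match is on the nose because the starred closure in the criterion is taken over the union of both colors, and $(A\cup B)\cup C = A\cup B\cup C$. So there is no genuine obstacle; the content of the theorem lies in recognizing that (3) and (4) have been shaped exactly to permit the two-step reduction.
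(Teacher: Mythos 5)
Your proposal is correct and is essentially identical to the paper's own proof: the paper likewise observes that the first inclusion is the two-relation jumping criterion for $A$ and $B$, and that the second is the same criterion with $C$ in the r\^ole of $B$ and $A\cup B$ in the r\^ole of $A$. Your only addition is the (accurate) sanity check that the starred closures line up under the renaming, which the paper leaves implicit.
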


\begin{proof}
The first inequality is the jumping criterion.
The second is the same with $C$ for $B$ and $\stack{A\\B}$ in place of $A$.
\end{proof}

For two relations, jumping provides a substantially weaker criterion for well-foundedness than does the appeal to Ramsey.
But for three,
whereas jumping allows more than one step for $BA$ (in essence, $A A^* B^*$),
it doesn't allow for $C$, which Ramsey does.

Switching r\^oles, start with jumping for $\stack{B\\C}$ before combining with $A$,
we get slightly different conditions yet:

\begin{theorem}[Jumping]
The union $\stack{A\\B\\C}$ is well-founded if
\begin{eqnarray}
CB & \subseteq & B\stack{B\\C}^* \cup C \\\nonumber
\stack{B\\C}A & \subseteq & A\stack{A\\B\\C}^* \cup B \cup C\;.
\end{eqnarray}
\end{theorem}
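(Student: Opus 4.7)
The plan is to invoke the two-relation jumping criterion twice, in the opposite order from the preceding theorem: first fuse $B$ and $C$ into a single well-founded relation, and then pair the resulting relation with $A$.

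First, I would observe that the hypothesis $CB \subseteq B\stack{B\\C}^* \cup C$ is precisely the jumping criterion for two relations, with $B$ playing the role of ``$A$'' and $C$ playing the role of ``$B$''. Since $B$ and $C$ are both well-founded by assumption, the jumping theorem for two relations immediately yields well-foundedness of $\stack{B\\C}$.

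Next, I would read the second hypothesis $\stack{B\\C}A \subseteq A\stack{A\\B\\C}^* \cup B \cup C$ as the jumping criterion applied to the pair $(A,\stack{B\\C})$: here $A$ plays the role of ``$A$'', $\stack{B\\C}$ plays the role of ``$B$'', the union on the right-hand side $\stack{A\\B\\C}$ is exactly $A \cup \stack{B\\C}$ as the criterion requires, and the trailing $B \cup C$ is simply $\stack{B\\C}$. Since $A$ is well-founded by assumption and $\stack{B\\C}$ is well-founded by the previous step, a second appeal to jumping delivers well-foundedness of $A \cup \stack{B\\C} = \stack{A\\B\\C}$, which is what we want.

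No real obstacle is anticipated: each step amounts to matching the shape of a hypothesis against the jumping criterion and checking that the unions involved line up correctly. The only substantive contrast with the preceding theorem is which two of the three colors are fused first --- there, $A$ and $B$ were merged before combining with $C$; here, $B$ and $C$ are merged before combining with $A$ --- which accounts for the ``incomparable'' nature of the resulting conditions remarked on earlier.
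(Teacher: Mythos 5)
Your proposal is correct and is exactly the argument the paper intends: the text introducing this theorem says to ``start with jumping for $\stack{B\\C}$ before combining with $A$,'' and the paper omits a written proof precisely because it is the same two-fold application of the two-relation jumping criterion as in the preceding theorem, with the roles permuted as you describe. Both of your pattern-matches (first $CB \subseteq B\stack{B\\C}^* \cup C$ to fuse $B$ and $C$, then $\stack{B\\C}A \subseteq A\stack{A\\B\\C}^* \cup \stack{B\\C}$ to fuse the result with $A$) line up correctly with the criterion.
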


Both this version of jumping and our tripartite condition allow
\begin{eqnarray*}
\stack{B\\C}A & \subseteq & A\stack{A\\B\\C}^* \cup B \cup C\\
CB & \subseteq & BB^* \cup C \;.
\end{eqnarray*}
They differ in that jumping also allows
\begin{eqnarray*}
CB & \subseteq & B\stack{B\\C}^* \;,
\end{eqnarray*}
whereas tripartite has
\begin{eqnarray*}
CB & \subseteq & A\stack{A\\B\\C}^* 
\end{eqnarray*}
instead.

Sadly, we cannot have the best of both worlds.
Let's color edges $\A$, $\B$, and $\C$ with
(solid) {\color{DodgerBlue}azure}, (dotted) black, and (dashed) {\color{Crimson}crimson} ink, respectively.
The graph
\begin{center}
\begin{pspicture}(-4,-1.5)(4,2)
\psline[linewidth=1.5pt,linecolor=Crimson,linestyle=dashed,linearc=4]{->}(-3.8,-0.1)(-1,-1.0)(1,-1)(3.9,-0.1) 
\psline[linewidth=1.5pt,linecolor=Black,linestyle=dotted]{->}(-3.9,0)(-1.1,0) 
\psline[linewidth=1.5pt,linecolor=DodgerBlue,,linearc=2]{->}(-0.9,0)(1.5,-0.5)(3.7,0) 
\psline[linewidth=1.5pt,linecolor=Black,linestyle=dotted,linearc=2]{->}(3.9,0.1)(3,1.5)(-2,1.5)(-3.9,0.1) 
\psline[linewidth=1.5pt,linecolor=Crimson,linestyle=dashed,linearc=2]{->}(-0.9,0.1)(1.4,0.9) 
\psline[linewidth=1.5pt,linecolor=Black,linestyle=dotted,linearc=2]{->}(1.6,0.9)(3.9,0) 
\rput(-4,0){$\bullet$}\rput(-1,0){$\bullet$}\rput(1.5,1){$\bullet$}\rput(4,0){$\bullet$} 
\end{pspicture}
\end{center}
only has multicolored loops
despite satisfying
\begin{equation*}
\begin{array}{rcl}
\stack{\B\\\C}\A & \subseteq & \C\\
\C \B & \subseteq & \A \cup \B\stack{\B\\\C}^* \;.
\end{array}
\end{equation*}
Even
\begin{equation*}
\begin{array}{rcl}
\stack{\B\\\C}\A & \subseteq & \C\\
\C \B & \subseteq & \B\stack{\A\\\B}^*
\end{array}
\end{equation*}
doesn't work. To wit, the double loop in 
\begin{center}
\begin{pspicture}(-4,-1.5)(4,2)
\psline[linewidth=1.5pt,linecolor=Crimson,linestyle=dashed,linearc=4]{->}(-3.8,-0.1)(-1,-1.0)(1,-1)(3.9,-0.1) 
\psline[linewidth=1.5pt,linecolor=Black,linestyle=dotted]{->}(-3.9,0)(-1.1,0) 
\psline[linewidth=1.5pt,linecolor=DodgerBlue,,linearc=2]{->}(-0.9,0)(1.5,-0.5)(3.7,0) 
\psline[linewidth=1.5pt,linecolor=Black,linestyle=dotted,linearc=2]{->}(3.9,0.1)(3,1.5)(-2,1.5)(-3.9,0.1) 
\rput(-4,0){$\bullet$}\rput(-1,0){$\bullet$}\rput(4,0){$\bullet$} 
\end{pspicture}
\end{center}
harbors no monochrome subchain.
By the same token, 
\begin{center}
\begin{pspicture}(-4,-1.5)(4,2)
\psline[linewidth=1.5pt,linecolor=Crimson,linestyle=dashed,linearc=4]{->}(-3.8,-0.1)(-1,-1.0)(1,-1)(3.9,-0.1) 
\psline[linewidth=1.5pt,linecolor=Black,linestyle=dotted]{->}(-3.9,0)(-1.1,0) 
\psline[linewidth=1.5pt,linecolor=DodgerBlue,,linearc=2]{->}(-0.9,0)(1.5,-0.5)(3.7,0) 
\psline[linewidth=1.5pt,linecolor=DodgerBlue,linearc=2]{->}(3.9,0.1)(3,1.5)(-2,1.5)(-3.9,0.1) 
\rput(-4,0){$\bullet$}\rput(-1,0){$\bullet$}\rput(4,0){$\bullet$} 
\end{pspicture}
\end{center}
counters the putative hypothesis
\begin{equation*}
\begin{array}{rcl}
\B \A  \cup \C \B & \subseteq & \C\\
\C \A & \subseteq & \B \A^* \;.
\end{array}
\end{equation*}


\end{document}